\newcommand{\R}{\mathbb{R}}
\newcommand{\C}{\mathbb{C}}
\newcommand{\N}{\mathbb{N}}
\renewcommand{\bar}{\overline}
\renewcommand{\rho}{\varrho}
\renewcommand{\phi}{\varphi}
\newcommand{\Ra}{\mathrm{Ra}}
\newcommand{\Da}{\mathrm{Da}}
\DeclareMathOperator{\esssup}{ess\,sup}
\newcommand{\scal}[2]{\left\langle#1,#2\right\rangle}
\theoremstyle{plain}
\newtheorem{thm}{Theorem}[section]
\newtheorem{lem}{Lemma}[section]
\title{A Darcy-Brinkman Model for Penetrative Convection in LTNE}
\author{  
G. Arnone\href{https://orcid.org/0000-0002-3317-6358}{\includegraphics[scale=0.1]{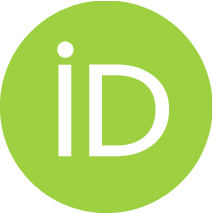}} \\ Dipartimento di Matematica e Applicazioni 'R.Caccioppoli' \\ Università degli Studi di Napoli Federico II \\ Via Cintia, Monte S.Angelo, 80126 Napoli \\ Italy \\   
\texttt{giuseppe.arnone@unina.it} \\ 
\And
F. Capone\thanks{Corresponding author.} \href{https://orcid.org/0000-0002-0672-999X}{\includegraphics[scale=0.1]{orcid.eps}} \\ Dipartimento di Matematica e Applicazioni 'R.Caccioppoli' \\ Universit\'a degli Studi di Napoli Federico II \\ Via Cintia, Monte S.Angelo, 80126 Napoli \\ Italy \\ 
\texttt{fcapone@unina.it} \\
\And 
J.A. Gianfrani\href{https://orcid.org/0000-0001-9906-2495}{\includegraphics[scale=0.1]{orcid.eps}} \\ Research Centre for Fluid and Complex Systems \\ Coventry University \\ Priory St, Coventry CV1 5FB \\ United Kingdom \\   
\texttt{ae2688@coventry.ac.uk} }
\begin{document}
\maketitle

\begin{abstract}
The aim of this paper is to investigate the onset of penetrative convection in a Darcy-Brinkmann
porous medium under the hypothesis of local therma non-equilibrium. For the problem at stake, the strong form of the principle of exchange of stabilities has been proved, i.e. convective motions can occur only through a secondary stationary motion. We perform linear and nonlinear stability analyses of the basic state motion, with particular regard to the behaviour of the stability thresholds with respect to the relevant physical parameters characterizing the model. 
    The Chebyshev-$\tau$ method and the shooting method are employed and implemented to solve the differential eigenvalue problems arising from linear and nonlinear analyses to determine critical Rayleigh numbers. Numerical simulations prove the stabilising effect of upper bounding plane temperature, Darcy's number and the interaction coefficient characterising the local thermal non-equilibrium regime. 
\end{abstract}
\keywords{Porous Media \and Penetrative Convection \and Density inversion \and Local thermal non-equilibrium \and Instability analysis }

\section{Introduction}

In the present paper, we analyse the onset of penetrative convective currents in a high-porosity porous medium in local thermal non-equilibrium (LTNE). Unlike natural convection, this type of convection occurs when the density of the fluid is not monotonically decreasing in temperature (as for the majority of fluids) but the fluid contracts, rather than expands, through heating.
This anomalous behaviour, sometimes called \emph{density inversion phenomenon}, is exhibited in very few cases in nature, among which water represents the most important case. Other cases include graphene, some complex compounds, some iron alloys, and cubic zirconium tingstenate ($\text{Zr}\text{W}_2\text{O}_8$), \cite{bettini2016course}. In the following, we will consider a porous layer $\Sigma$ saturated by water with the bottom plane maintained at $0^\circ C$ and the top plane at a temperature greater than $4^\circ C$ (which is the temperature's value at which water has a maximum). It turns out that $\Sigma$ is partitioned in two sub-regions: $\Sigma_1$ and $\Sigma_2$, as depicted in Figure \ref{pensetup}. The former region is delimited by the lower plane and an intermediate plane where the fluid density attains its maximum, and the latter is delimited by the intermediate plane and the upper plane. Because of the above-described anomalous behaviour, the assumed physical setup implies that $\Sigma_1$ is a potentially unstable fluid region laying below a stably stratified region $\Sigma_2$. When convective motion occur in $\Sigma_1$, the fluid motion will then penetrate in the upper stable region $\Sigma_2$. The first constitutive equation for the density-temperature relation which describes the penetrative convection phenomenon was proposed by Veronis in 1963, \cite{Veronis1963}. Since the pioneering piece of work by Veronis appeared, considerable attention has been devoted to penetrative convection, see e.g. \cite{Straugh1985,CapGentHill2010,arnone2023onset}, in particular because of its applications in geophysics \cite{GeorGunnStraugh1989,HuttStraugh1997}. These papers model the penetrative convection phenomenon in porous media under the hypothesis of local thermal equilibrium (LTE), by defining a single temperature for both fluid and solid phases, which are in thermal equilibrium. From the modeling viewpoint, one temperature equation is required. The novelty of the present paper is the development of a theoretical investigation of a two-field penetrative convection model which involves two energy equations, one for the fluid phase and one for the solid one. These equations are coupled by means of a term representing loss or gain of heat from the other phase.
In literature, assuming that heat exchanges between fluid and solid phases are allowed is well-known as LTNE hypothesis. The two-temperature model was first introduced by Banu and Rees with a groundbreaking paper \cite{banu2002onset} in 2002. Since then, many researchers have devoted their attention to thermal instability phenomena in porous media in LTNE \cite{barletta2015instability, barletta2011local, kuznetsov2011effect}. Straughan in 2006 \cite{straughan2006global} was the first to study the nonlinear stability in a porous medium in LTNE. Interesting results are obtained when considering Cattaneo's law as constitutive equation for heat diffusion in the solid phase \cite{capone2022onset, straughan2013porous, hema2021cattaneo}. 
\begin{figure}[h!]
    \centering
    \includegraphics[scale=0.4]{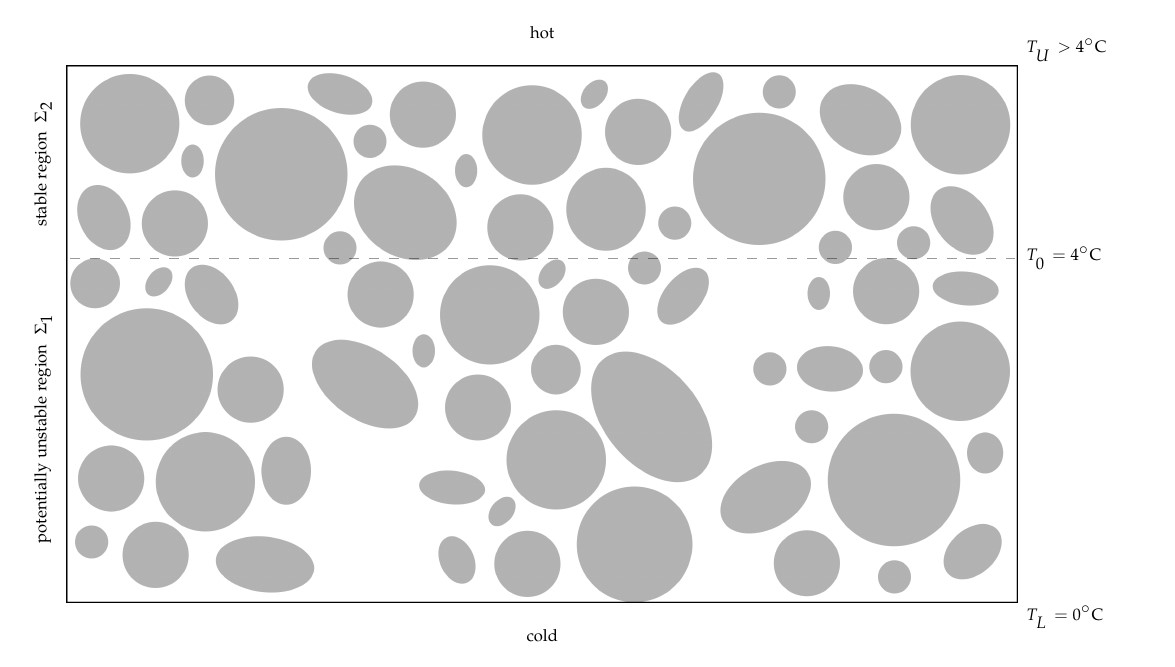}
    \caption{Physical setup for the penetrative convection problem.}
    \label{pensetup}
\end{figure}
The paper is organised as follows. In Section 2 the mathematical model describing the phenomenon of penetrative convection in the hypothesis of LTNE is described and the non-dimensional equations governing the evolution of the perturbations of the conduction solution are derived. In Section 3 the strong form of the principle of exchange of stabilities is proved, and the instability analysis of the linearised model is performed in order to determine the instability thresholds for the onset of penetrative convection, via the Chebyshev-$\tau$ method. In Section 4, weighted energy analysis is performed and the nonlinear critical Rayleigh numbers are determined via the shooting method. In Section 5 numerical simulations are performed, commented on, and summarised. The paper ends with an Appendix containing several details about the employed Chebyshev-$\tau$ method and some technical details about a useful employed apriori estimate in the context of nonlinear analysis.

\section{Mathematical Model}\label{model}
Let $Oxyz$ be a Cartesian frame of reference where the $z$-axis is vertically upward and let us consider a horizontal isotropic porous layer delimited by two impervious planes ($z=0$ and $z=d$) and saturated by a Newtonian fluid at rest. Planes confining the layer are kept at a constant temperature at any time in such a way that the fluid-saturated porous medium is heated from above. As a consequence, a uniform gradient of temperature is imposed and maintained constant across the medium.
Moreover, let us assume the LTNE between solid matrix and fluid, which implies that heat exchanges between solid and fluid phases occur. This assumption requires us to define two unknown fields: the fluid temperature $T^f$ and the solid temperature $T^s$. Hence, let $T_L$ be the uniform temperature on the lower plane ($z=0$) and let $T_U$ be the uniform temperature on the upper plane ($z=d$). The following boundary conditions hold for the problem at stake:
\begin{equation}\label{bc_iniz}
    T^s=T^f=T_L \ \text{on } z=0, \qquad T^s=T^f=T_U \ \text{on } z=d,
\end{equation}
where $T_U>T_L$. In penetrative convection problems, the planes' temperatures are prescribed in such a way that fluid density $\rho_f(T^f)$ may achieve a maximum value in $[T_L,T_U]$. Specifically, as shown by Veronis \cite{Veronis1963}, fluid density exhibits a parabolic behaviour in that temperature interval when it is described by the quadratic function  
\begin{equation}\label{quadraticdensity}
    \rho_f(T^f)=\rho_0\left[1-\alpha(T^f-T_0)^2\right],
\end{equation}
where $T_0$ is a reference temperature, $\rho_0$ the corresponding reference fluid density and $\alpha$ is the thermal expansion coefficient. We now restrict our attention to when the fluid saturating porous medium is \emph{water}. Indeed, among all fluids, water exhibits this density inversion phenomenon in the neighborhood of $4^\circ C$. As shown in Figure \ref{figVer}, the constitutive density law \eqref{quadraticdensity} by Veronis, when $T_0=4^\circ C$, accurately catches the experimental data reported in Table \ref{tabVer}.
\begin{figure}[h!]
    \centering
    \includegraphics[scale=0.7]{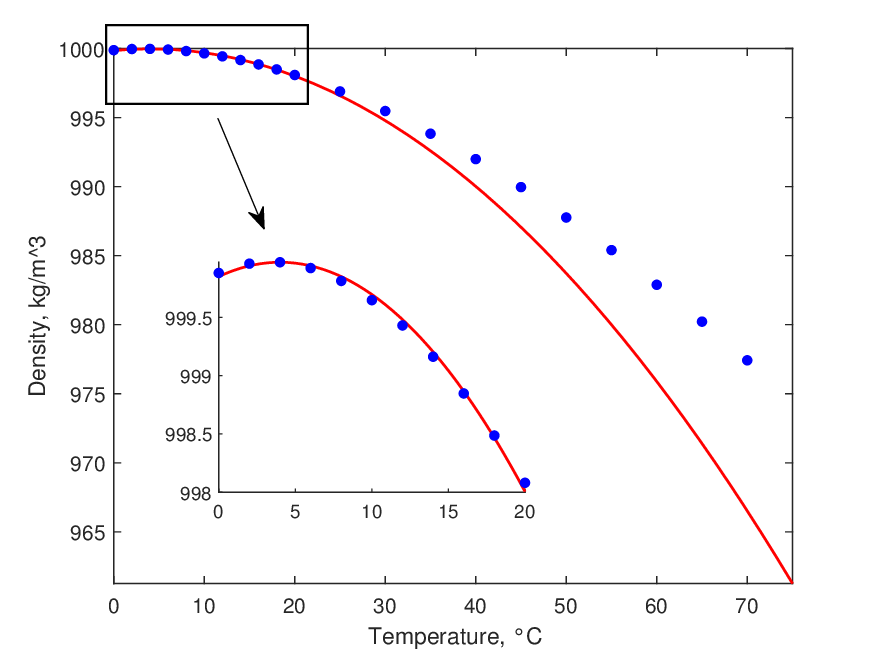}
    \caption{Comparison between experimental values of the water density under isobaric condition \cite{wagner2002iapws} and the Veronis' constitutive law \eqref{quadraticdensity} where $\alpha=7.68\times 10^{-6}\phantom{.}^\circ C^{-2}$ and $T_0=4^\circ C$.}
    \label{figVer}
\end{figure}
\begin{table}[h!]\centering
 \scriptsize
		\begin{tabular}{@{}lll@{}}\toprule
	water density $(kg/m^3)$  &&$ T_U(^\circ C) $\\ \midrule
 999.880 && 0 \\
 999.960 && 2 \\
  999.972  && 4 \\
  999.922    && 6 \\
   999.812     && 8 \\
   999.647       && 10 \\
			\bottomrule
		\end{tabular}
		\caption{Experimental data of water density.}
		\label{tabVer}
	\end{table}\\

Let us introduce a very last assumption regarding the porous layer. In order for the LTNE hypothesis to hold strongly we assume that the porous medium exhibits very high porosity \cite{straughan2015convection}. As a consequence, within a representative elementary volume, the fraction of void spaces is considerably greater than the total volume. 

It is well-known in literature that in order to better describe the fluid motion within a porous medium with high porosity the Darcy-Brinkman law is the best candidate, \cite{NB}. According to this law, the momentum equation is 
\begin{equation}\label{momentum}
    \dfrac{\mu}{k}\textbf{v}=-\nabla p-g\rho_f(T^f)\textbf{k}+\widetilde{\mu}\Delta \textbf{v},
\end{equation}
where $\textbf{v}$, $p$ and $\rho_f$ are the seepage velocity, pressure and fluid density respectively, $\textbf{k} = (0, 0, 1)$, $g$ is the modulus of gravity acceleration,
 $k$ is the permeability of the porous medium, $\mu$ is the fluid dynamic viscosity, while $\widetilde{\mu}$ is the effective viscosity $(\mu\neq\widetilde\mu)$.

 Along with the continuity equation and the energy balance equations for both fluid and solid phases, Eq. \eqref{momentum} provides the following model
\begin{equation} \label{mod}
\begin{cases}
    \dfrac{\mu}{k}\textbf{v}=-\nabla \widetilde{p}-2g\rho_0\alpha T_0T^f\textbf{k}+g\rho_0\alpha (T^f)^2\textbf{k}+\widetilde{\mu}\Delta \textbf{v},\\
    \nabla\cdot \textbf{v}=0,\\
    \varepsilon(\rho c)_f\dfrac{\partial T^f}{\partial t}+(\rho c)_f\textbf{v}\cdot \nabla T^f=\varepsilon k_f\Delta T^f+h(T^s-T^f),\\
    (1-\varepsilon)(\rho c)_s \dfrac{\partial T^s}{\partial t}=(1-\varepsilon) k_s\Delta T^s-h(T^s-T^f),
\end{cases}
\end{equation}
where $\widetilde{p}$ is defined as follows
\begin{equation}
\widetilde{p}=p+g\rho_0(1-\alpha T_0^2)z,
\end{equation}
and $\varepsilon$, $\rho_i$, $c_i$ and $k_i$ ($i=f,s$) are medium porosity, density, specific heat and thermal conductivity, respectively, of fluid and solid phases. In the present framework, the boundary conditions \eqref{bc_iniz} become the following
\begin{equation}\label{bc1}
\begin{split}
    &T^s=T^f=T_L=0^\circ C,\quad z=0   \\
    &T^s=T^f=T_U\geq 4^\circ C, \quad z=d\\ 
    &\textbf{v}\cdot \textbf{n}=0 ,\quad z=0,d
\end{split}
\end{equation}
where Eq. \eqref{bc1}$_3$ models the impervious planes, being $\textbf{n}$ the outward unit normal to planes $z=0, d$. 

In system \eqref{mod} the Oberbeck-Boussinesq approximation is considered. According to this hypothesis, the quadratic density law \eqref{quadraticdensity} has been retained only in the gravitational term in Eq. \eqref{mod}$_1$ and set constant elsewhere. The previous hypothesis implies that the fluid can be considered \emph{as if} incompressible. Let us remark that, in non-isothermal processes it is not possible to assume the fluid as \emph{incompressible} at all. The incompressibility assumption requires that all the constitutive functions do not depend on pressure. Accounting for this hypothesis, M\"uller \cite{Muller1985} proved that the only function $\rho=\rho(T)$ compatible with the entropy principle is a constant. This conclusion is actually in disagreement with empirical observations and, in particular, with the Oberbeck-Boussinesq approximation. In order to fix this contradiction, Gouin et al. in \cite{gouinmuller2012}, introduced the class of
quasi-thermal-incompressible fluids, i.e. media for
which the fluid density is \emph{the only} function independent
on the pressure among all the constitutive equations. The Authors proved that a quasi-thermal-incompressible fluid tends to be perfectly incompressible when $p\ll p_{cr}$, being $ p_{cr}$ a critical pressure value below which the incompressibility assumption is thermodynamically consistent. In this paper, we will assume $p\ll \widetilde{p}_{cr}$, where the critical pressure value $\widetilde{p}_{cr}$ for the validity of the constitutive density function \eqref{quadraticdensity} can be found in \cite{giuseppe1}. Let us remark that the thermodynamic consistency issues, in the context of thermal convection problems in both clear fluids and porous media, attracted
considerable attention in recent years \cite{passerini2014benard,corli2019benard,arnone2023compressibility}. \\

Model \eqref{mod} is now non-dimensionalised by introducing the following set of variables, where the asterisks denote nondimensional fields
\begin{equation}\label{nondim1}
    \begin{split}
 &\textbf{x}=d\textbf{x}^*,\quad t=\tau t^*\\
 &\textbf{v}=V\textbf{v}^*,\quad \widetilde{p}=P p^*,\quad T^f=T^{\#}(T^f)^*,\quad T^s=T^{\#}(T^s)^*,
    \end{split}
\end{equation}
where
\begin{equation}\label{nondim2}
\begin{split}
  &\tau=\dfrac{(\rho c)_f d^2}{k_f}, \\
  &V=\dfrac{\varepsilon k_f}{(\rho c)_fd},\quad  P=\dfrac{\mu\varepsilon k_f}{k(\rho c)_f} ,\quad T^{\#}=\sqrt{\dfrac{\mu\varepsilon k_f}{2g\rho_0\alpha k (\rho c)_f d}} .
\end{split}
\end{equation}

By substituting \eqref{nondim1}-\eqref{nondim2} into  \eqref{mod}, we end up with the following nondimensional model (where asterisks have been dropped out of notation's convenience):
\begin{equation}\label{modad}
    \begin{cases}
        \textbf{v}=-\nabla p -\Ra \zeta T^f\textbf{k}+\dfrac{(T^f)^2}{2}\textbf{k}+\Da \Delta \textbf{v},\\
        \nabla\cdot \textbf{v}=0,\\
        \dfrac{\partial T^f}{\partial t}+\textbf{v}\cdot \nabla T^f=\Delta T^f +H(T^s-T^f),\vspace{1mm}\\
        A\dfrac{\partial T^s}{\partial t}=\Delta T^s-H\gamma(T^s-T^f),
    \end{cases}
\end{equation}
being 
\begin{equation}
    H=\dfrac{h d^2}{\varepsilon k_f},\quad A=\dfrac{(\rho c)_s k_f}{(\rho c)_f k_s},\quad \gamma=\dfrac{\varepsilon k_f}{(1-\varepsilon )k_s},
\end{equation}
interaction heat transfer coefficient, diffusivity ratio, weighted conductivity ratio, respectively. While the non-dimensional parameters $\Ra$ (the thermal Darcy-Rayleigh
number), $\Da$ (the Darcy number) and $\zeta$ are respectively given by
\begin{equation}\label{def_Ra}
 \Ra=\sqrt{\dfrac{2g\rho_0\alpha k (\rho c)_f d T_U^2}{\mu\varepsilon k_f}},\quad \Da=\dfrac{\widetilde{\mu}k}{d^2\mu},\quad \zeta=\dfrac{4}{T_U}.
\end{equation}
It is worth remarking that, by definition, 
\begin{equation}
    \Ra=\dfrac{T_U}{T^{\#}},
\end{equation}
therefore, boundary conditions \eqref{bc1} become
\begin{equation}\label{bc2}
\begin{split}
& T^s=T^f=0,\quad z=0,\\
& T^s=T^f=\Ra,\quad z=1,\\
& \textbf{v}\cdot \textbf{n}=0,\quad z=0,1.
\end{split}
\end{equation}

Model \eqref{modad} and \eqref{bc2} admits the following steady solution as basic motion (conduction solution), according to which fluid is at rest and heat spreads by conduction:
\begin{equation}\label{mb}
    m_b=(\textbf{v}_b,p_b,T^f_b,T^s_b)
\end{equation}
with
\begin{equation}
\begin{split}
   &\textbf{v}_b=(0,0,0),  \\
   &p_b=cost.-g\rho_0 d P^{-1}(1-\alpha T_0^2)z-\dfrac{\Ra \ \zeta }{2}z^2 +\dfrac{\Ra^2}{6}z^3,\\
   &T^f_b=\Ra \ z,\\
   &T^s_b=\Ra \ z.
\end{split}
\end{equation}

In order to study the stability of $m_b$, let us introduce the perturbation fields $(\textbf{u},\pi,\theta,\phi)$ to velocity, pressure,
fluid temperature and solid temperature, respectively. The following solution of \eqref{modad} originates once perturbations on initial data are applied:
\begin{equation}\label{solnew}
    \textbf{v}=\textbf{v}_b+\textbf{u},\quad p=p_b+\pi,\quad T^f=T^f_b+\theta,\quad T^s=T^s_b+\phi.
\end{equation}
Substituting \eqref{solnew} into \eqref{modad}, we get
\begin{equation}\label{modad2}
    \begin{cases}
        \textbf{u}=-\nabla \pi -\Ra M(z)\theta\textbf{k}+\dfrac{\theta^2}{2}\textbf{k}+\Da \Delta \textbf{u},\\
        \nabla\cdot \textbf{u}=0,\\
        \dfrac{\partial \theta}{\partial t}+\textbf{u}\cdot \nabla\theta+\Ra w=\Delta \theta +H(\phi-\theta),\vspace{1mm}\\
        A\dfrac{\partial \phi}{\partial t}=\Delta \phi-H\gamma(\phi-\theta),
    \end{cases}
\end{equation}
with $M(z)=\zeta-z$ and with the \emph{stress-free} boundary conditions
\begin{equation}\label{bcNL}
   u_z=v_z= w=\phi=\theta=0,\quad z=0,1,
\end{equation}
and initial conditions
\begin{equation}\label{IC}
 \textbf{u}(\textbf{x},0) \! = \!\textbf{u}_0(\textbf{x}) \,,\quad \theta(\textbf{x},0)\!=\!\theta_0(\textbf{x})\,,\quad \phi(\textbf{x},0)\!=\!\phi_0(\textbf{x})\,,\quad \pi(\textbf{x},0)\!=\! \pi_0(\textbf{x}),
\end{equation} 
where $\nabla\cdot\textbf{u}_0=0$.

In the following, motivated by the physics of the problem, we assume that perturbations are periodic in the $x$ and $y$ direction with periods $\frac{2\pi}{k_x}$ and $\frac{2\pi}{k_y}$, respectively. Moreover, $\forall g\in\{\textbf{u}, \pi, \theta, \phi\}$
\begin{equation}
    g \ : (\textbf{x},t) \in \Omega\times \R^+ \rightarrow g(\textbf{x},t)\in\R \ \text{and } g\in W^{2,2}(\Omega) \ \forall t\in\R^+,
\end{equation}
where we denote by $\Omega$ the periodicity cell
\begin{equation}
    \Omega = \left[0,\frac{2\pi}{k_x}\right]\times\left[0,\frac{2\pi}{k_y}\right]\times[0,1]
\end{equation}
and $g$ can be expanded in a Fourier series uniformly convergent in $\Omega$.

\section{Instability of $m_b$}\label{linear}
In this section, we are going to study the instability of the basic motion $m_b$, with the aim of determining the critical Rayleigh number beyond which thermal instability occurs. Hence, by neglecting nonlinear terms in \eqref{modad2}, we get
\begin{equation}\label{modad3}
    \begin{cases}
        \textbf{u}=-\nabla \pi -\Ra M(z)\theta\textbf{k}+\Da \Delta \textbf{u},\\
        \nabla\cdot \textbf{u}=0,\\
        \dfrac{\partial \theta}{\partial t}+\Ra w=\Delta \theta +H(\phi-\theta),\vspace{1mm}\\
        A\dfrac{\partial \phi}{\partial t}=\Delta \phi-H\gamma(\phi-\theta),
    \end{cases}
\end{equation}
with boundary conditions
\begin{equation}\label{bclin}
    u_z=v_z=w=\phi=\theta=0,\quad z=0,1.
\end{equation}
System \eqref{modad3} is autonomous, therefore we can look for solutions such that
\begin{equation}\label{auton}
    f(\textbf{x},t)=\widehat{f}(\textbf{x})e^{\sigma t} \quad \forall f\in \{\textbf{u},\pi,\theta,\phi\},
\end{equation}
where $\sigma \in \C$ is the temporal growth rate of perturbation.

Let us now apply the double curl to Eq. \eqref{modad3}$_1$ and retain only the third component of the resulting equation. 
Substitution of \eqref{auton} into the resulting model originating from \eqref{modad3} will lead to
\begin{equation}\label{modad4}
    \begin{cases}
       \Delta \widehat{w}=-\Ra M(z)\Delta_1\widehat{\theta}+\Da\Delta\Delta \widehat{w}, \\
        \sigma \widehat{\theta}+\Ra \widehat{w}=\Delta \widehat{\theta} +H(\widehat{\phi}-\widehat{\theta}),\\
        \sigma \dfrac{A}{\gamma} \widehat{\phi}=\dfrac{1}{\gamma}\Delta \widehat{\phi}-H(\widehat{\phi}-\widehat{\theta}).
    \end{cases}
\end{equation}
By defining the operator $\mathcal{L}$ as follows
\begin{equation}
    \mathcal{L}:=\Delta-\Da \Delta\Delta,
\end{equation}
and by applying this operator to \eqref{modad4}$_2$ and \eqref{modad4}$_3$, we get
\begin{equation}
    \begin{cases}
       \mathcal{L} \widehat{w}=-\Ra M(z)\Delta_1\widehat{\theta},\\
        \sigma \mathcal{L}\widehat{\theta}+\Ra \mathcal{L}\widehat{w}=\mathcal{L}\Delta \widehat{\theta} +H(\mathcal{L}\widehat{\phi}-\mathcal{L}\widehat{\theta}),\\
        \sigma \dfrac{A}{\gamma} \mathcal{L}\widehat{\phi}=\dfrac{1}{\gamma}\mathcal{L}\Delta \widehat{\phi}-H(\mathcal{L}\widehat{\phi}-\mathcal{L}\widehat{\theta}),
    \end{cases}
\end{equation}
from which, easily,
\begin{equation}\label{eq1}
    \begin{cases}
        \sigma \mathcal{L}\widehat{\theta}-\Ra^2 M(z)\Delta_1\widehat{\theta}=\mathcal{L}\Delta \widehat{\theta} +H(\mathcal{L}\widehat{\phi}-\mathcal{L}\widehat{\theta}),\\
        \sigma \dfrac{A}{\gamma} \mathcal{L}\widehat{\phi}=\dfrac{1}{\gamma}\mathcal{L}\Delta \widehat{\phi}-H(\mathcal{L}\widehat{\phi}-\mathcal{L}\widehat{\theta}).
    \end{cases}
\end{equation}

Let us multiply \eqref{eq1}$_1$ by $\theta^*$ and \eqref{eq1}$_2$ by $\phi^*$, where the asterisk denotes the complex conjugate and integrate over the periodicity cell $\Omega$. Denoting by $\scal{\cdot}{\cdot}$ and by $\|\cdot\|$ the standard scalar product and the associated norm on the Hilbert space $L^2(\Omega)$, we obtain
\begin{equation}\label{eq2}
    \begin{cases}
        \sigma \scal{\mathcal{L}\widehat{\theta}}{\widehat{\theta}^*}=\Ra^2\scal{M(z)\Delta_1\widehat{\theta}}{\widehat{\theta}^*}+\scal{\mathcal{L}\Delta\widehat{\theta}}{\widehat{\theta}^*}+H\left(\scal{\mathcal{L}\widehat{\phi}}{\widehat{\theta}^*}-\scal{\mathcal{L}\widehat{\theta}}{\widehat{\theta}^*}\right),\vspace{1mm}\\
        \sigma \dfrac{A}{\gamma}\scal{\mathcal{L}\widehat{\phi}}{\widehat{\phi}^*}=\dfrac{1}{\gamma}\scal{\mathcal{L}\Delta\widehat{\phi}}{\widehat{\phi}^*}-H\left(\scal{\mathcal{L}\widehat{\phi}}{\widehat{\phi}^*}-\scal{\mathcal{L}\widehat{\theta}}{\widehat{\phi}^*}\right).
    \end{cases}
\end{equation}
By virtue of periodicity assumption and boundary conditions \eqref{bclin}, we get
\begin{equation}
    \begin{split}
        \sigma\left(-\|\nabla\widehat{\theta}\|^2-\Da\|\nabla\nabla\widehat{\theta}\|^2\right)=&-\Ra^2\scal{M(z)}{|\nabla_1\widehat{\theta}|^2}-\|\nabla\nabla\widehat{\theta}\|^2-\Da\|\nabla\nabla\nabla\widehat{\theta}\|^2\\
        &+H\left(\scal{\mathcal{L}\widehat{\phi}}{\widehat{\theta}^*}-\|\nabla\widehat{\theta}\|^2-\Da\|\nabla\nabla\widehat{\theta}\|^2\right),\\
        \\
        \sigma\left(-\dfrac{A}{\gamma}\|\nabla\widehat{\phi}\|^2-\dfrac{A\Da}{\gamma}\|\nabla\nabla\widehat{\phi}\|^2\right)=&-\dfrac{1}{\gamma}\|\nabla\nabla\widehat{\phi}\|^2-\dfrac{\Da}{\gamma}\|\nabla\nabla\nabla\widehat{\theta}\|^2\\
        &-H\left(\|\nabla\widehat{\phi}\|^2-\Da\|\nabla\nabla\widehat{\phi}\|^2-\scal{\mathcal{L}\widehat{\theta}}{\widehat{\phi}^*}\right).
    \end{split}
\end{equation}
Adding the two previous equations, it turns out that
\begin{equation}
    \sigma\in\mathbb{R}.
\end{equation}
This is immediate since the following chain of equivalences holds
\begin{equation}
    \scal{\mathcal{L}\widehat{\phi}}{\widehat{\theta}^*}=\scal{\widehat{\phi}}{\mathcal{L}(\widehat{\theta}^*)}=\scal{\widehat{\phi}}{(\mathcal{L}\widehat{\theta})^*}=\scal{\mathcal{L}\widehat{\theta}}{\widehat{\phi}^*}^*,
\end{equation}
therefore,
\begin{equation}
    \scal{\mathcal{L}\widehat{\theta}}{\widehat{\phi}^*}+\scal{\mathcal{L}\widehat{\theta}}{\widehat{\phi}^*}^*\in\mathbb{R}.
\end{equation}

As a consequence, we can claim the validity of the \emph{strong form of the principle of exchange of stabilities}. This implies that, provided that conditions ensuring instability are verified, perturbations on the basic motion grow without oscillations \cite{landau2013fluid}. Consequently, a new stable steady configuration is reached. In this context, it is said that thermal convection occurs \emph{via steady motions}.

Turning our attention back to system \eqref{modad4}, because of periodicity of perturbation fields, we can write solutions as
\begin{equation}\label{sol}
    \widehat{f}(x,y,z)= \sum_{n=1}^{+\infty} \widetilde{F}_n(x,y,z) \qquad \widehat{f}\in \{\textbf{u},\theta,\phi\},
\end{equation}
where
\begin{equation}\label{sol2}
    \Delta_1\widetilde{F}_n(x,y,z)=-k^2\widetilde{F}_n(z),\qquad k^2=k_x^2+k_y^2.
\end{equation}

Once \eqref{sol}-\eqref{sol2} is plugged into \eqref{modad4}, linearity of the model allows to retain only the $n$-th component. Hence, dropping the subscript, system \eqref{modad4} becomes
\begin{equation}\label{modad5}
    \begin{cases}
       (D^2-a^2) \widetilde{W}=\Ra M(z)k^2\widetilde{\Theta}+\Da (D^2-k^2)^2\widetilde{W} ,\\
        \sigma \widetilde{\Theta}+\Ra \widetilde{W}=(D^2-k^2) \widetilde{\Theta} +H(\widetilde{\Phi}-\widetilde{\Theta}),\\
        \sigma A \widetilde{\Phi}=(D^2-k^2) \widetilde{\Phi}-H\gamma(\widetilde{\Phi}-\widetilde{\Theta}),
    \end{cases}
\end{equation}
where we denote by $D^2=\frac{d^2}{dz^2}$, with boundary conditions
\begin{equation}\label{bc5}
   D^2\widetilde{W}=\widetilde{W}=\widetilde{\Theta}=\widetilde{\Phi}=0,\quad z=0,1.
\end{equation}
System \eqref{modad5}-\eqref{bc5} is a differential eigenvalue problem of this kind
\begin{equation}
    \mathcal{A}\textbf{X}=\sigma\mathcal{B}\textbf{X} \qquad \textbf{X}=(\widetilde{W},\widetilde{\Theta},\widetilde{\Phi}).
\end{equation}
The presence of $z$-dependent coefficients makes the problem demanding from the analytical view point, therefore we implement and employ the Chebyshev-$\tau$ method. The idea behind this numerical procedure involves the discretisation of differential operators $\mathcal{A}$ and $\mathcal{B}$ by mean of Chebyshev polynomials, taking advantage of their several good properties. Once the problem has been reduced to an algebraic eigenvalue problem, the common MatLab routine \texttt{eig} is employed. Further details on the method are provided in Appendix \ref{sec_numerical}.

It is worth remarking that, having proved the strong form of the principle of exchange of stabilities, we expect the numerical method to provide only real values for $\sigma$. This check has been undertaken and verified, proving the goodness of numerical results. 

In this framework, it is well known that $\sigma=0$ allows us to determine the neutral stability curve, which delimits the instability region. Therefore, we are interested in determining first those couples $(k,\Ra)$ such that $\sigma(k,\Ra)=0$ and then, among them, finding the couple $(k_c, \Ra_L)$ that solves the following minimum problem 
\begin{equation}
   \min_{k^2\in\mathbb{R}^+}\Ra^2.
\end{equation}

\section{Nonlinear stability analysis}
In this section we are going to study the nonlinear stability of the conduction solution $m_b$ \eqref{mb} via the well-established energy method \cite{energymethod2004, GaldiRionero1985}. 
Let us introduce the following weighted energy functional
\begin{equation}\label{energy}
    E(t)=\dfrac{1}{2}\scal{g(z)}{\theta^2}+\dfrac{A}{2\gamma}\scal{g(z)}{\phi^2},
\end{equation}
where $g(z)$ is a positive real function to be suitably chosen and $\scal{\cdot}{\cdot}$ and $\|\cdot\|$ are the real scalar product on $L^2(\Omega)$ and the related norm, respectively. If Eq. \eqref{modad2}$_1$ is multiplied by $\textbf{u}$ and integrated over $\Omega$,   Eq. \eqref{modad2}$_3$ is multiplied by $g(z)\theta$ and integrated over $\Omega$, Eq. \eqref{modad2}$_4$ is multiplied by $g(z)\phi$ and integrated over $\Omega$, the sum of the resulting equations can be written as
\begin{equation}\label{eqenergy}
\begin{split}
  \dfrac{dE}{dt}=&\,\dfrac{1}{2}\scal{g'(z)w}{\theta^2}-\Ra\scal{g(z)w}{\theta}-\scal{\theta_z}{\theta g'(z)}-\scal{g(z)}{|\nabla\theta|^2}\\
  &+2H\scal{g(z)\phi}{\theta}-H\scal{g(z)}{\theta^2}-\dfrac{1}{\gamma}\scal{\phi_z}{\phi g'(z)}-\dfrac{1}{\gamma}\scal{g(z)}{|\nabla\phi|^2}\\
  &-H\scal{g(z)}{\phi^2}+H\scal{g(z)\phi}{\theta}-\|\textbf{u}\|^2-\Ra\scal{M(z)\theta}{w}\\
  &-\dfrac{1}{2}\scal{w}{\theta^2}-\Da\|\nabla\textbf{u}\|^2.
\end{split}
\end{equation}
Now, in order to handle the cubic term $\scal{g'(z)w}{\theta^2}-\scal{w}{\theta^2}$, we choose
\begin{equation}
    g(z)=\mu-z, \qquad \mu>1,
\end{equation}
where $\mu$ is a parameter to be optimally chosen later. In such a way, \eqref{eqenergy} becomes
\begin{equation}\label{en1}
    \dfrac{dE}{dt}=\Ra I-D,
\end{equation}
where
\begin{equation}\label{dissipation}
    \begin{split}
        I=&\, -\scal{(\mu+\zeta-2z)w}{\theta},\\
        D=&\, \|\textbf{u}\|^2+\Da\|\nabla\textbf{u}\|^2+\scal{\mu-z}{|\nabla\theta|^2}\\
        &\qquad\quad\qquad+\dfrac{1}{\gamma}\scal{\mu-z}{|\nabla\phi|^2}+H\scal{\mu-z}{|\theta-\phi|^2}
    \end{split}
\end{equation}
are respectively the production term and the dissipation term.

Hence, by denoting
\begin{equation}\label{varprob}
    \dfrac{1}{\Ra_E}=\max_{\mathcal{H}}\dfrac{I}{D},
\end{equation}
where the space of kinematically admissible functions is defined as:
\begin{equation}
\begin{split}
  \mathcal{H}=\Bigl\{(\textbf{u},\theta,\phi)\in W^{1,2}(\Omega)\;&|\;\nabla \cdot \textbf{u}=0,\;x,y\;\text{periodic}\\
  &\text{with period }2\pi/k_x,2\pi/k_y,\;\text{verifying (\ref{bcNL})}\Bigr\} ,  
\end{split}
\end{equation}
from \eqref{en1}, we obtain
\begin{equation}\label{en2}
    \dfrac{dE}{dt}=\Ra I-D=-D\left(1-\Ra \dfrac{I}{D}\right)\leq -\left(\dfrac{\Ra_E-\Ra}{\Ra_E}\right)D.
\end{equation}
Looking at definition \eqref{dissipation}$_2$, by applying the weighted Poincaré inequality for which
\begin{equation}
    \scal{\mu-z}{\theta^2}\leq c_P\scal{\mu-z}{|\nabla\theta|^2},\quad \scal{\mu-z}{\phi^2}\leq c_P\scal{\mu-z}{|\nabla\phi|^2},
\end{equation}
we get from \eqref{en2}
\begin{equation}
    \dfrac{dE}{dt} \leq -c\left(\dfrac{\Ra_E-\Ra}{\Ra_E}\right)E,
\end{equation}
where
\begin{equation}
    c=\max\left\{\dfrac{2}{Ac_P},\frac{2}{c_P}\right\}.
\end{equation}
Hence, if $\Ra<\Ra_E$,
\begin{equation}
    E(t)\leq E(0)e^{-\alpha t}
\end{equation}
where
\begin{equation}
    \alpha=c\dfrac{\Ra_E-\Ra}{\Ra_E}.
\end{equation}
We have recovered the exponential decay in time of the energy $E(t)$, Eq. \eqref{energy}, when $\Ra<\Ra_E$. As a consequence, as long as  $\Ra<\Ra_E$, perturbations on fluid and solid temperature tend to zero exponentially as $t\rightarrow +\infty$. In order to discuss the nonlinear stability of $m_b$ within the regime for $\Ra<\Ra_E$, we need to determine the faith of the seepage velocity norm. We are able to show that we can control $\|\textbf{u}\|$ with $\|\theta\|$.

Indeed, let us multiply \eqref{modad2}$_1$ by $\textbf{u}$ and integrate over $\Omega$. The resulting equation will be
\begin{equation}
    \|\textbf{u}\|^2 + \Da \|\nabla\textbf{u}\|^2 = |\Ra\scal{g(z) \theta}{w}| + \left|\frac{1}{2} \scal{\theta^2}{w}\right|.
\end{equation}
The Cauchy-Schwartz inequality and the Poincaré inequality lead to
\begin{equation}
    \left(\dfrac{1}{4}+\dfrac{\Da}{c_p}\right)\|\textbf{u}\|^2\leq \dfrac{M^2 \Ra^2}{2}\|\theta\|^2+\dfrac{1}{4}\|\theta^2\|^2,
\end{equation}
where $ M:=\max_{z\in [0,1]}|\zeta-z| $.
The following Lemma comes to help on estimating and controlling $\|\theta^2\|^2$:

\begin{lem}\label{lemma}
	Let $\Omega_1$ and $\Omega_2$ be sets that partition the periodicity cell $\Omega$ such that
	\begin{equation}\label{}
		\begin{split}
		&\Omega_1=\left\{\textbf{x}\in\Omega\;:\; \theta(\textbf{x},t)>\Ra-T_b\right\},\\
		&\Omega_2=\left\{\textbf{x}\in\Omega\;:\; \theta(\textbf{x},t)\leq\Ra-T_b\right\}	.
		\end{split}
	\end{equation}
If 
\begin{equation}\label{}
	\theta_0(\textbf{x}),\phi_0(\textbf{x})\in W^{2,2}(\Omega)\cap L^\infty(\Omega),
\end{equation}
then, there exists a positive constant $ \Gamma $ such that
\begin{equation}\label{}
	\theta(\textbf{x},t)+T_b(z)-\Ra\leq \Gamma,
\end{equation}
with 
\begin{equation}\label{}
	\Gamma=\begin{cases}
		\bar{\theta}_0 &\quad \text{if}\quad \phi_0\leq \Ra -T_b(z), \\
		\bar{\theta}_0+\bar{\phi}_0&\quad \text{otherwise},
	\end{cases} 
\end{equation}
and
\begin{equation}\label{}
	\begin{split}
		&\bar{\theta}_0=\underset{\Omega_1}{\esssup}\left\{\left(\theta_0(\textbf{x})+T_b(z)-\Ra\right)_+\right\},\\
		&\bar{\phi}_0=\underset{\Omega}{\esssup}\left\{\left(\phi_0(\textbf{x})+T_b(z)-\Ra\right)_+\right\}.
	\end{split}
\end{equation}
\end{lem}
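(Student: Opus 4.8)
The plan is to recognise that the claimed inequality is really an $L^\infty$ (maximum–principle) bound on the \emph{physical} fluid temperature measured from the warm plate. I would first introduce the shifted fields
\[
\eta := \theta + T_b(z) - \Ra = T^f - \Ra, \qquad \xi := \phi + T_b(z) - \Ra = T^s - \Ra,
\]
where $T_b(z)=\Ra z$. Since $T_b(z)-\Ra$ is affine in $z$, substituting into \eqref{modad2}$_3$ and \eqref{modad2}$_4$ and using $\mathbf{u}\cdot\nabla\eta=\mathbf{u}\cdot\nabla\theta+\Ra w$ absorbs the troublesome term $\Ra w$ into the advective derivative, giving the clean coupled system
\[
\eta_t + \mathbf{u}\cdot\nabla\eta = \Delta\eta + H(\xi-\eta), \qquad A\,\xi_t = \Delta\xi - H\gamma(\xi-\eta).
\]
The conditions \eqref{bcNL}, via $T_b(0)=0$ and $T_b(1)=\Ra$, become $\eta=\xi=-\Ra\le 0$ on $z=0$ and $\eta=\xi=0$ on $z=1$, together with $x,y$-periodicity; hence $\eta,\xi\le 0$ on all of $\partial\Omega$. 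Proving the lemma then amounts to bounding $\esssup_\Omega\eta$ in terms of the initial data.

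The core of the argument is a maximum principle for this advection–diffusion–reaction system. I would set $m(t)=\max_{\bar\Omega}\eta(\cdot,t)$ and $n(t)=\max_{\bar\Omega}\xi(\cdot,t)$. Because $\eta,\xi\le 0$ on $\partial\Omega$, whenever $m(t)>0$ the maximum of $\eta$ is attained at an interior point, at which $\nabla\eta=0$ (so the divergence-free advection $\mathbf{u}\cdot\nabla\eta$ vanishes) and $\Delta\eta\le 0$; likewise for $\xi$. Evaluating the two equations at such points and differentiating the maxima (Hamilton's trick) yields the differential inequalities $\dot m\le H(n-m)$ and $A\dot n\le H\gamma(m-n)$, valid as long as the relevant maximum is positive.

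The decisive observation is the antisymmetry of the interphase exchange: the reaction enters as $+H(\xi-\eta)$ in the fluid equation and as $-H\gamma(\xi-\eta)$ in the solid one. This forces $\max(m,n)$ to be non-increasing — if $m>n$ then $\dot m\le H(n-m)<0$, while if $n>m$ then $\dot n\le (H\gamma/A)(m-n)<0$ — so that $\max\bigl(m(t),n(t)\bigr)\le\max\bigl(m(0),n(0),0\bigr)$. Identifying the positive part of $m(0)$ with $\bar\theta_0$ and that of $n(0)$ with $\bar\phi_0$ reproduces the stated bound: when the solid does not initially overshoot the plate, i.e. $\phi_0\le\Ra-T_b(z)$ so that $n(0)\le 0=\bar\phi_0$, one obtains $\eta\le\bar\theta_0$, while otherwise $\eta\le\max(\bar\theta_0,\bar\phi_0)\le\bar\theta_0+\bar\phi_0=\Gamma$. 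The hypothesis $\theta_0,\phi_0\in W^{2,2}(\Omega)\cap L^\infty(\Omega)$ is precisely what legitimises these pointwise manipulations, since Sobolev embedding furnishes continuous representatives; a rigorous alternative that avoids pointwise Hessians is a Stampacchia truncation, testing the $\eta$-equation with $(\eta-k)_+$ and the $\xi$-equation with $(\xi-k)_+$, using $\nabla\cdot\mathbf{u}=0$ to annihilate the advective term and integration by parts to produce $-\|\nabla(\eta-k)_+\|^2\le 0$, and then showing the truncated energies vanish for $k=\Gamma$.

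The main obstacle is the coupling term $H(\xi-\eta)$, which is genuinely of indefinite sign — the solid may be locally hotter than the fluid — so that neither equation admits a maximum principle in isolation; the entire estimate hinges on treating the two fields simultaneously and exploiting the opposite coefficients $+H$ and $-H\gamma$ of the exchange to close the bound on $\max(m,n)$. A secondary technical point is the time-dependent set $\Omega_1=\{\,\mathbf{x}:\eta(\mathbf{x},t)>0\,\}$ entering $\bar\theta_0$: because the comparison above controls $\max\eta(\cdot,t)$ only through initial values at points lying in (the closure of) the positivity region, the essential supremum of the initial fluid overshoot may be taken over $\Omega_1$ rather than over all of $\Omega$, which is the refinement recorded in the statement.
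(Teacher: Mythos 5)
Your proof is correct in substance, but it reaches the bound by a genuinely different route from the paper. The paper never passes to a pointwise maximum principle: it multiplies the $T^f$- and $T^s$-equations by $[(T^f-\Ra)_+]^{2n-1}$ and $[(T^s-\Ra)_+]^{2n-1}$, observes that the summed exchange terms are non-positive thanks to the monotonicity inequality $(|a|^{p-2}a-|b|^{p-2}b)(a-b)\ge 0$, deduces that the combined $L^{2n}$-functional $\mathcal{F}^{1/2n}$ is non-increasing, and then sends $n\to\infty$ to recover the $L^\infty$ bound. Your primary argument (Hamilton's trick on $m(t)=\max\eta$, $n(t)=\max\xi$, with the antisymmetric coefficients $+H$ and $-H\gamma$ forcing $\max(m,n)$ to decrease) exploits exactly the same structural feature of the coupling, is shorter, and even yields the slightly sharper constant $\max(\bar\theta_0,\bar\phi_0)$ in place of the sum; its cost is that it requires pointwise evaluation of $\Delta\eta$ and $\eta_t$ at interior maxima for $t>0$, which the hypothesis $\theta_0,\phi_0\in W^{2,2}\cap L^\infty$ does not directly supply --- you would need to invoke parabolic regularity of the solution, a step you gesture at but do not carry out. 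Your fallback Stampacchia truncation at the single level $k=\max(\bar\theta_0,\bar\phi_0)$ is essentially the $p=2$ member of the paper's family of estimates applied at a shifted level, works at the weak-solution level, and is arguably cleaner than the full $L^{2n}\to L^\infty$ limit; either version closes the proof. One caveat: your final remark purporting to justify taking the essential supremum of the initial overshoot over the time-dependent set $\Omega_1$ rather than over $\Omega$ does not actually follow from the comparison argument (neither yours nor the paper's controls $\eta(\cdot,t)$ by initial values restricted to the positivity set at time $t$); the paper's own appendix in fact derives the bound with $\esssup_\Omega$, and the restriction to $\Omega_1$ is only legitimate on the left-hand side of the estimate.
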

\begin{proof}
   Since the proof is lenghtly, it shall be delegated to Appendix \ref{proof}.
\end{proof}

Consequently, recalling the inequality $(a+b)^2\leq 2(a^2+b^2)$ and applying Lemma \ref{lemma}, one obtains
\begin{equation}
\begin{split}
\|\theta^2\|^2&\leq 2\int_{\Omega_1}(\theta+T_b-\Ra)^2\theta^2d\Omega+2\int_\Omega(-T_b+\Ra)^2\theta^2d\Omega\\
&\leq \Upsilon \|\theta\|^2 
\end{split}
\end{equation}
where
\begin{equation} 
\Upsilon=\max\Bigl\{4\Bigl[ \bar{\theta}_0+\bar{\phi}_0\Bigr]^2,2\max_{z\in [0,1]}\left[(-T_b+\Ra)^2\right] \Bigr\}   
\end{equation}

and 
\begin{equation}
    \left(\dfrac{1}{4}+\dfrac{\Da}{c_p}\right)\|\textbf{u}\|^2\leq \dfrac{M^2 \Ra^2}{2}\|\theta\|^2+\dfrac{\Upsilon}{4}\|\theta\|^2.
\end{equation}

In conclusion, we recover the following stability theorem
\begin{thm}
    If $\Ra<\Ra_E$, then the conduction solution $m_b$ is globally nonlinearly stable.
\end{thm}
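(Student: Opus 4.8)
The plan is to combine the estimates already established into a decay statement for the entire perturbation triple $(\textbf{u},\theta,\phi)$, the decisive point being that the energy inequality derived above is \emph{unconditional}: the choice $g(z)=\mu-z$ cancelled the cubic contributions to $dE/dt$ exactly, so $dE/dt\le-\alpha E$ holds with no smallness restriction on the initial energy. Global (as opposed to merely conditional) stability is therefore already built into that inequality, and what remains is to translate the decay of $E$ into decay of each physical field.

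First I would integrate $dE/dt\le-\alpha E$, with $\alpha=c\,(\Ra_E-\Ra)/\Ra_E>0$ for $\Ra<\Ra_E$, to get $E(t)\le E(0)e^{-\alpha t}$. Because $\mu>1$ gives $0<\mu-1\le g(z)\le\mu$ on $[0,1]$, the functional $E$ is equivalent to $\|\theta\|^2+\|\phi\|^2$; indeed $E\ge\frac{\mu-1}{2}\big(\|\theta\|^2+\tfrac{A}{\gamma}\|\phi\|^2\big)$. Consequently $E(t)\to0$ exponentially forces $\|\theta(t)\|\to0$ and $\|\phi(t)\|\to0$ as $t\to+\infty$, and this holds for arbitrary admissible initial data.

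Next I would recover the seepage velocity, which $E$ does not control directly. Starting from the momentum estimate $\big(\tfrac14+\tfrac{\Da}{c_P}\big)\|\textbf{u}\|^2\le\tfrac{M^2\Ra^2}{2}\|\theta\|^2+\tfrac14\|\theta^2\|^2$ and inserting Lemma \ref{lemma} in the form $\|\theta^2\|^2\le\Upsilon\|\theta\|^2$, I obtain $\|\textbf{u}(t)\|^2\le C\|\theta(t)\|^2$ with $C=\big(\tfrac{M^2\Ra^2}{2}+\tfrac{\Upsilon}{4}\big)/\big(\tfrac14+\tfrac{\Da}{c_P}\big)$. For fixed initial data $\Upsilon$ is a finite constant, so the already-established $\|\theta(t)\|\to0$ propagates to $\|\textbf{u}(t)\|\to0$. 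Collecting the three limits shows that every perturbation norm vanishes as $t\to+\infty$, i.e. $m_b$ is globally nonlinearly stable.

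I expect the genuine obstacle to be exactly the bound $\|\theta^2\|^2\lesssim\|\theta\|^2$, that is, dominating an $L^4$-type quantity by the $L^2$ norm, which cannot hold with a universal constant. It is available only through Lemma \ref{lemma}, whose maximum-principle argument furnishes a uniform-in-time $L^\infty$ control on $\theta+T_b-\Ra$ and thereby tames the cubic buoyancy term. I would emphasise that the globality of the theorem rests on the earlier exact cancellation of the cubic terms in $dE/dt$, while Lemma \ref{lemma} serves only to transfer the decay from $\theta$ to $\textbf{u}$ — a step in which the constant $\Upsilon$ is allowed to depend on the (finite) size of the initial data without compromising the conclusion.
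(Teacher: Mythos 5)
Your proposal is correct and follows essentially the same route as the paper: the unconditional inequality $dE/dt\le-\alpha E$ obtained from the weighted functional with $g(z)=\mu-z$, followed by the momentum identity and Lemma \ref{lemma} to convert the bound $\|\theta^2\|^2\le\Upsilon\|\theta\|^2$ into decay of $\|\textbf{u}\|$. Your closing remarks — that globality comes from the cancellation of the cubic terms and that Lemma \ref{lemma} only serves to transfer decay from $\theta$ to $\textbf{u}$, with $\Upsilon$ depending on the initial data — accurately describe the structure of the paper's argument.
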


The critical threshold $\Ra_E$ is determined by solving the variational problem \eqref{varprob}. The Euler-Lagrange equations are
\begin{equation}\label{E-L}
    \begin{cases}
        \textbf{u}-\Da \Delta\textbf{u}+\Ra_E F(z)\theta\textbf{k}=\nabla l,\\
        \Ra_E F(z)w+\theta_z-g(z)\Delta \theta+Hg(z)(\theta-\phi)=0,\\
        \phi_z-g(z)\Delta\phi-H\gamma g(z)(\theta-\phi)=0,
    \end{cases}
\end{equation}
where $F(z)=\frac{\mu}{2}-\frac{\zeta}{2}-z$ and $l$ is a Lagrange multiplier.

By retaining the third component of the double curl of \eqref{E-L}$_1$ and employing \eqref{sol}-\eqref{sol2} to write the solution, we get
\begin{equation}\label{EL2}
    \begin{cases}
    (D^2-k^2) \widetilde{W} -\widetilde{Y}=0,\\
        -\widetilde{Y}+\Da (D^2-k^2) \widetilde{Y}+\Ra_E F(z)k^2\widetilde{\Theta}=0,\\
        \Ra_E F(z)\widetilde{W}+D\widetilde{\Theta}-g(z)(D^2-k^2) \widetilde{\Theta}+Hg(z)(\widetilde{\Theta}-\widetilde{\Phi})=0,\\
        D\widetilde{\Phi}-g(z)(D^2-k^2)\widetilde{\Phi}-H\gamma g(z)(\widetilde{\Theta}-\widetilde{\Phi})=0,
    \end{cases}
\end{equation}
with the usual boundary conditions:
\begin{equation}\label{bc_EL2}
    \widetilde{Y}=\widetilde{W}=\widetilde{\Theta}=\widetilde{\Phi}=0,\quad z=0,1.
\end{equation}
In order to solve the 8th order differential eigenvalue problem \eqref{EL2}-\eqref{bc_EL2}, with non-constant coefficients, we implement a shooting method coupled with a Newton-Raphson scheme. For additional details on the method we refer to \cite{rees2011, capone2023weakly, rees2000onset}.

To ensure that Eqs. \eqref{EL2}-\eqref{bc_EL2} provide nonzero solution, we need to add one extra boundary condition:
\begin{equation}\label{extrabc}
    D\widetilde{\Theta}=1, \ \text{on }  z=0
\end{equation}
and, consequently, one extra equation
\begin{equation}\label{extraeq}
    D\Ra_E=0,
\end{equation}
which enforces that $\Ra_E$ is constant.

Moreover, the critical Rayleigh number for nonlinear stability will be provided by solving the following problem
\begin{equation}\label{maxprob}
    \Ra_{E,c}^2=\max_{\mu>1}\min_{k^2\in\mathbb{R}^+}\Ra_E^2.
\end{equation}
As a result, system \eqref{EL2}, \eqref{bc_EL2}, \eqref{extrabc}, \eqref{extraeq} is replaced by the following 18th order system
\begin{equation}\label{18th}
    \begin{cases}
    (D^2-k^2) \widetilde{W} -\widetilde{Y}=0,\\
        -\widetilde{Y}+\Da (D^2-k^2) \widetilde{Y}+\Ra_E F(z)k^2\widetilde{\Theta}=0,\\
        \Ra_E F(z)\widetilde{W}+D\widetilde{\Theta}-g(z)(D^2-k^2) \widetilde{\Theta}+Hg(z)(\widetilde{\Theta}-\widetilde{\Phi})=0,\\
        D\widetilde{\Phi}-g(z)(D^2-k^2)\widetilde{\Phi}-H\gamma g(z)(\widetilde{\Theta}-\widetilde{\Phi})=0,\\
        D\Ra_E=0,\\
        (D^2-k^2) \widetilde{W}_k -2k\widetilde{W} -\widetilde{Y}_k=0,\\
        -\widetilde{Y}_k+\Da (D^2-k^2) \widetilde{Y}_k-2k\widetilde{Y}+\Ra_E F(z)k^2\widetilde{\Theta}_k+2\Ra_E F(z)k\widetilde{\Theta}=0,\\
        \Ra_E F(z)\widetilde{W}_k+D\widetilde{\Theta}_k-g(z)(D^2-k^2) \widetilde{\Theta}_k+2g(z)k \widetilde{\Theta}+Hg(z)(\widetilde{\Theta}_k-\widetilde{\Phi}_k)=0,\\
        D\widetilde{\Phi}_k-g(z)(D^2-k^2)\widetilde{\Phi}_k+2g(z)k\widetilde{\Phi}-H\gamma g(z)(\widetilde{\Theta}_k-\widetilde{\Phi}_k)=0,\\
        Dk=0,
    \end{cases}
\end{equation}
where the following boundary conditions are added to \eqref{bc_EL2}-\eqref{extrabc}
\begin{equation}\label{extrabc2}
\begin{aligned}
    &     \widetilde{Y}_k=\widetilde{W}_k=\widetilde{\Theta}_k=\widetilde{\Phi}_k=0,\ \text{on } z=0,1,\\
    &      D\widetilde{\Theta}_k=0, \ \text{on } z=0,
\end{aligned}
\end{equation}
and where the subscript $k$ denotes the partial derivative with respect to the wavenumber $k$.
Finally, system \eqref{bc_EL2}, \eqref{extrabc}, \eqref{18th}, \eqref{extrabc2} is solved for each choice of $\mu>1$.

\section{Numerical results and conclusions}
This section is devoted to numerical results obtained from the application of Chebyshev-$\tau$ method and shooting method to differential eigenvalue problems \eqref{modad5}, \eqref{bc5} and \eqref{18th},\eqref{bc_EL2}, \eqref{extrabc}, \eqref{extrabc2}, respectively. We discuss the effect of parameters characterising the studied physical problem on the onset of thermal instability.

First of all, let us remark that, given the difficulty in assigning a precise value to the interaction heat transfer coefficient $H$, we set an interval where $H$ can vary, following the choice of \cite{govender2007effect}. Therefore, we set $H\in (10^{-4}, 10^{4})$. This choice will allow us to have a better insight of the effect of parameters on the critical Rayleigh number $\Ra^2_L$, defined as a function of $\log_{10} H$.

\begin{figure}[h!]
    \centering
    \includegraphics[scale=0.7]{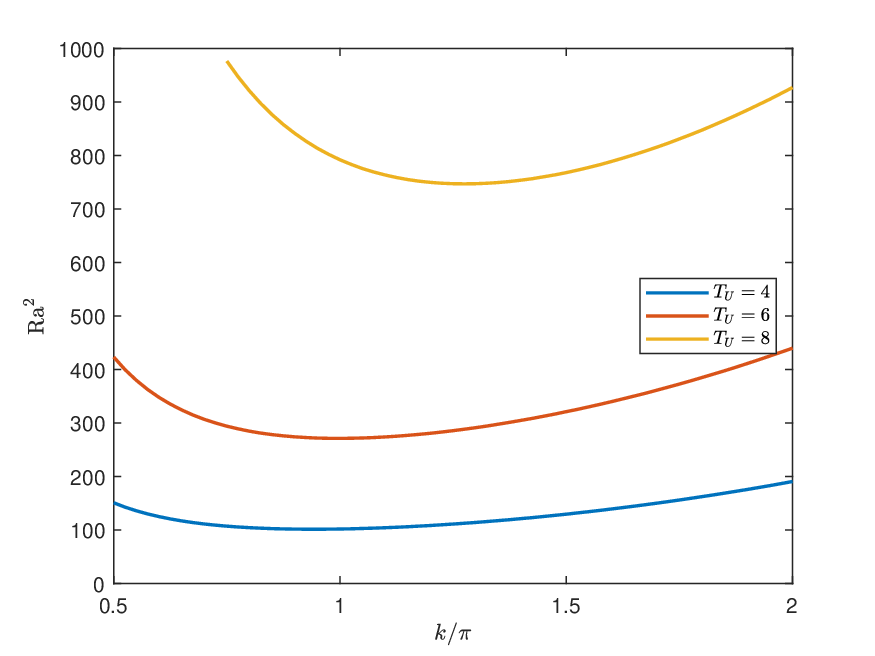}
    \caption{Neutral stability curves for quoted values of the upper plane temperature $T_U$, with $\Da=0.01, \gamma=10$ and $H=100$.}
    \label{fig1}
\end{figure}
In Figure \ref{fig1}, the neutral stability curves for quoted values of the upper plane temperature $T_U$ are depicted. The stabilising effect of $T_U$ on the occurrence of penetrative convection is highlighted. This behaviour is expected from a physical viewpoint given that an increasing upper plane temperature results in a thinning of the potentially unstable fluid region, which is the part of the fluid where its temperature is in the interval $(0,4)^\circ C$. Consequently, this fluid portion struggles to penetrate the upper stable fluid region, resulting in a delay of the onset of penetrative convection.

\begin{figure}[h!]
    \centering
    \includegraphics[scale=0.9]{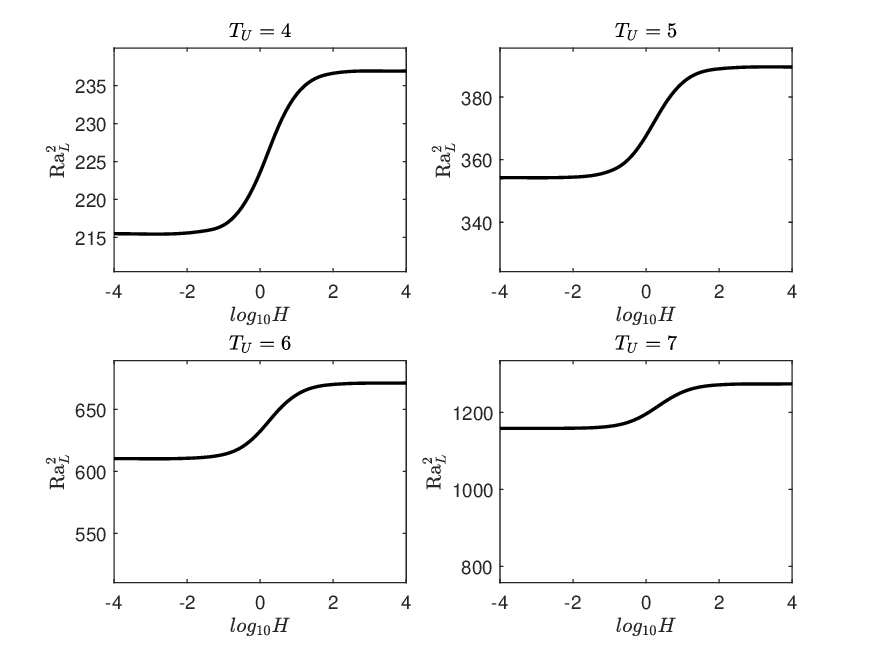}
    \caption{Critical linear Rayleigh number as function of $\log_{10}H$ for quoted values of the upper plane temperature $T_U$, with $\Da=0.1$ and $\gamma=10$.}
    \label{fig2}
\end{figure}
The stabilising effect of $T_U$ is also evident in Figure \ref{fig2}, where, unlike the previous case, $H$ is varying in the aforementioned interval. Hence, Figure \ref{fig2} shows the variation of the critical Rayleigh number from linear analysis with respect to $\log_{10} H$ for four values of $T_U$. It is worth remarking the monotonic behaviour of $\Ra_L$ with $H$, which is a typical trend in problems in the LTNE regime \cite{banu2002onset}. Moreover, let us notice that the critical Rayleigh number tends asymptotically to constant values in the limit for both large and small $H$. Physically, as $H$ approaches $0$ the fluid acts independently from the solid phase as they are not exchanging heat, whereas as $H\rightarrow \infty$, the solid and fluid phases exchange heat so fast that they reach thermal equilibrium immediately and they can be considered as a single phase. In both cases, we recover the LTE regime. As a result, the respective
mathematical problems are identical except for a rescaling of $\Ra_L$. The rescaling factor has been determined analytically in \cite{banu2002onset}. Let us remark that, following definition \eqref{def_Ra}, $\Ra^2$ is the Rayleigh number based on fluid properties, while the rescaled Rayleigh number is based on the porous medium properties:
\begin{equation}
    \frac{\gamma}{\gamma+1} \Ra^2 =\dfrac{2g\rho_0\alpha k (\rho c)_f d T_U^2}{\mu\left[\varepsilon k_f+(1-\varepsilon)k_s\right]}
\end{equation}

\begin{figure}[h!]
    \centering
    \includegraphics[scale=0.9]{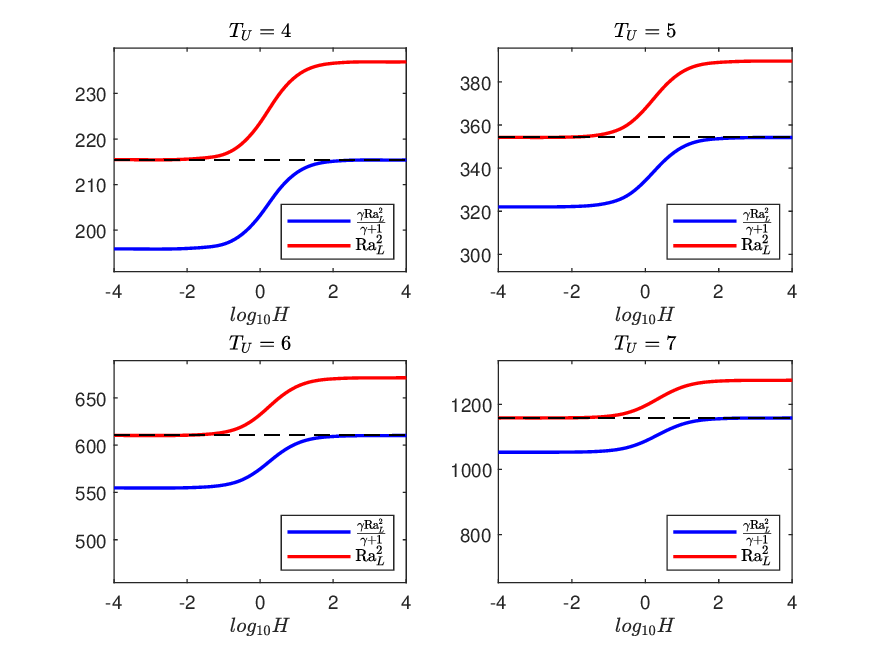}
    \caption{Comparison between critical Rayleigh number and rescaled critical Rayleigh number as functions of $\log_{10}H$ for quoted values of the upper plane temperature $T_U$, with $\Da=0.1$ and $\gamma=10$.}
    \label{fig3}
\end{figure}

 \begin{table}[h!]\centering
 \scriptsize
		\begin{tabular}{@{}l|l|lll@{}}\toprule
	$\Ra_{_L}^2$\cite{giuseppe1}\qquad\qquad\qquad&$\Ra_{_L}^2\;(H=0)$ & $\frac{\gamma\Ra_{_L}^2}{\gamma+1}\;(H=\infty)$  &&$ T_U(^\circ C) $\\ \midrule
77.079 & 77.071 & 77.065 && 4 \\
123.462 & 123.447 & 123.450 && 5 \\
198.030 & 198.009 & 198.026 && 6 \\
313.547 & 313.531 & 313.527 && 7 \\
471.384 & 471.331 & 471.338 && 8 \\
672.119 & 672.072 & 672.050 && 9 \\
921.929 & 921.882 & 921.850 && 10 \\
			\bottomrule
		\end{tabular}
		\caption{Critical Rayleigh numbers in the limit of LTE compared with the ones obtained in \cite{giuseppe1}, with $\Da=0$ and $\gamma=1$, for quoted values of $T_U$.}
		\label{tab1}
	\end{table}

In Figure \ref{fig3} we report a comparison between the critical Rayleigh number and the rescaled critical Rayleigh number. 
As expected, the rescaled Rayleigh number when $H\rightarrow\infty$ approaches $\Ra_L$ in the limit for small $H$ values. 

This agreement is also evident in Table \ref{tab1}, where results obtained for this problem are compared with findings in  \cite{giuseppe1}. Table \ref{tab1} shows a very good agreement, confirming the goodness of the numerical procedure employed in the present paper.

\begin{figure}[h!]
    \centering
    \includegraphics[scale=0.7]{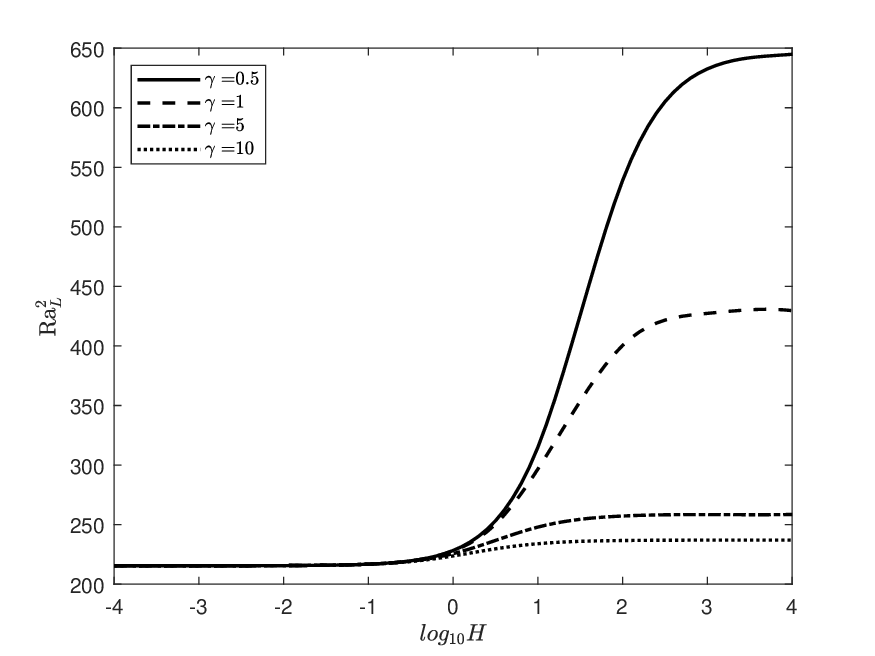}
    \caption{Critical linear Rayleigh number as function of $\log_{10}H$ for quoted values of $\gamma$, with $\Da=0.1$ and $T_U=4$.}
    \label{figgamma}
\end{figure}

\begin{figure}[h!]
    \centering
    \includegraphics[scale=0.7]{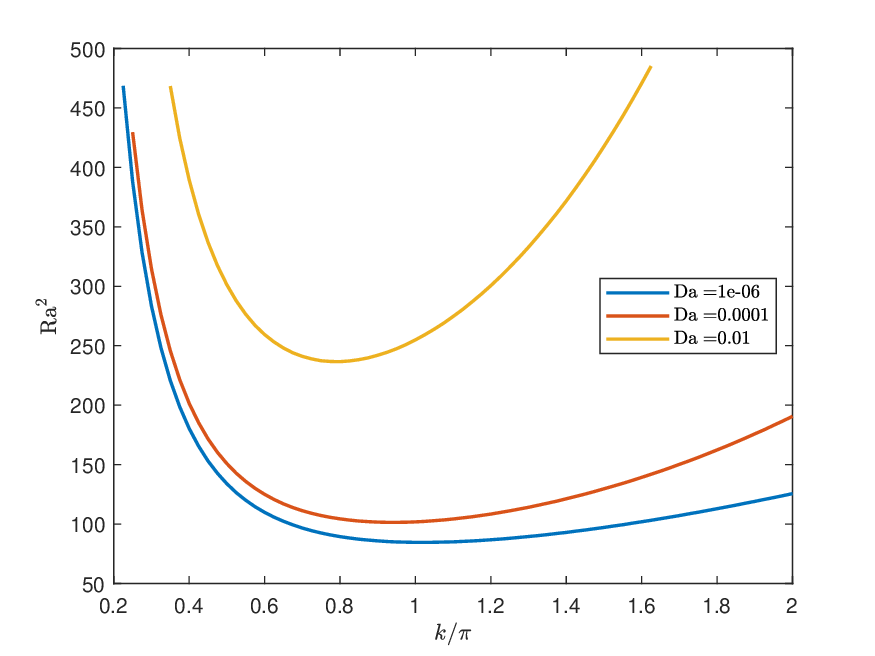}
    \caption{Neutral stability curves for quoted values of the Darcy number $\Da$, with $T_U=4, \gamma=10$ and $H=100$.}
    \label{figDa}
\end{figure}

Figure \ref{figgamma} shows the destabilising effect of $\gamma$ on the onset of penetrative convection. This behaviour is well-known in literature and it is physically reasonable. Large $\gamma$ implies that the weighted fluid thermal conductivity is higher than the solid one. Therefore, heat diffusion is quicker within the fluid phase, resulting in a easier occurrence of thermal instability.

Regarding the effect of high porosity on the onset of instability, Figure \ref{figDa} describes the behaviour of neutral stability curves for prescribed values of the Darcy number. As expected, increasing $\Da$ leads to higher critical Rayleigh number.

 \begin{table}[h!]\centering
 \scriptsize
		\begin{tabular}{@{}ll|llll|llll|lll@{}}\toprule
	\multicolumn{3}{c}{$\zeta=1\, (T_U=4)$} & \phantom{c}&\multicolumn{3}{c}{$\zeta=0.8\, (T_U=5)$} &  \phantom{c}&\multicolumn{3}{c}{$\zeta\simeq 0.6\, (T_U=6)$}&\phantom{c}& \\
			\cmidrule{1-3} \cmidrule{5-7} \cmidrule{9-11}
	$\mu$ &	$\Ra_{E,c}^2$&$ \Ra_{_L}^2 $&& $\mu$ & $\Ra_{E,c}^2$ & $\Ra_{_L}^2 $  && $\mu$ & $\Ra_{E,c}^2$ & $\Ra_{_L}^2$  &&$ \Da $\\ \midrule
		 1.110& 136.089 & 140.583 & & 1.018& 203.914 & 223.555  &&1.004& 282.691 & 352.754 && 0 \\
          1.118&  163.836 & 169.692 & &1.022& 247.388 & 273.790 &&1.005& 347.809 & 447.179 && 0.01 \\
          1.132&  384.144 & 400.464 & &1.031& 584.823 & 657.492  &&1.008& 837.449 & 1102.020 && 0.1 \\
			\bottomrule
		\end{tabular}
		\caption{Comparison between linear and nonlinear critical Rayleigh numbers for quoted values of $T_U$ and $\Da$, with $H=100$ and $\gamma=1$.}
		\label{tab2}
	\end{table}

\begin{figure}[h!]
    \centering
    \includegraphics[scale=0.7]{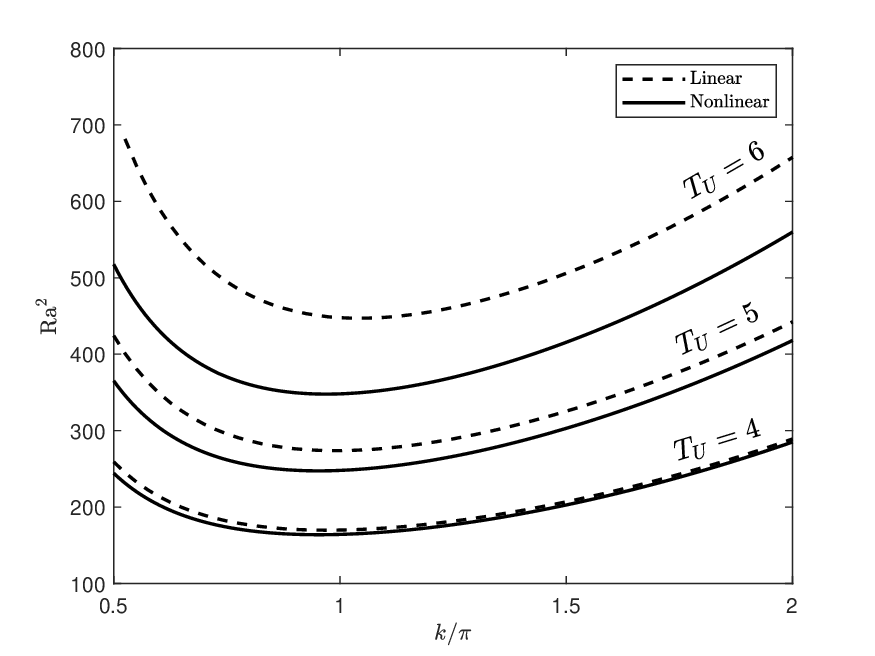}
    \caption{Linear and nonlinear marginal stability curves for quoted values of the upper plane temperature $T_U$ with $\Da=0.01$, $H=100$ and $\gamma=1$.}
    \label{figcamparison}
\end{figure}

Table \ref{tab2} and Figure \ref{figcamparison} are meant to compare results obtained from linear and nonlinear stability analyses. In Figure \ref{figcamparison}, the neutral stability curves from the linear analysis are plotted together with curves obtained solving the Euler-Lagrange equations coming from a weighted energy analysis. \\
\\
It is worth remarking that when the upper temperature is $T_U=4^\circ C$, the stably stratified region vanishes and the entire porous layer becomes potentially unstable. \\
\\
In summary, our results show that
\begin{itemize}
    \item in the limit cases, $H\rightarrow 0 $ and $ H\rightarrow \infty$, i.e. in LTE, the instability thresholds coincide with the ones found in \cite{giuseppe1};
    \item the principle of exchange of stabilities has been rigorously proved;
    \item the upper bounding plane temperature and Darcy's number have both a stabilising effect on the onset of penetrative convection;
    \item in LTNE the instability thresholds show a monotonic trend when $H$ increases;
    \item the nonlinear stability analysis, via weighted energy method has been performed. 
\end{itemize}

 \appendix
 \section{Appendix}
 \subsection{Numerical procedure}\label{sec_numerical}
The eigenvalue problem \eqref{modad5}-\eqref{bc5} was solved thanks to the Chebyshev-$\tau$ method, thoroughly implemented on MatLab, first developed by 2021 Turing-prize-awarded Dongarra (see \cite{dongarra}). In this section, we provide some details useful to the implementation of the method. For a deeper discussion we refer to \cite{arnone2023chebyshev,bourne2003, capone2022natural}.

Let $T_k\in L^2(-1,1)$ be the $k$-th Chebyshev polynomial, with $k\in\N_0$. 
Solution of \eqref{modad5} can be expanded as truncated series of Chebyshev polynomials, i.e.
\begin{equation}\label{expansion2}
 \widetilde{W}= \sum_{k=0}^{N+2} W_k T_k(z), \quad \widetilde{\Theta} = \sum_{k=0}^{N+2} \Theta_k T_k(z), \quad \widetilde{\Phi} = \sum_{k=0}^{N+2} \Phi_k T_k(z),
\end{equation}
paying attention first to transforming the definition interval of the unknown fields from $(0,1)$ to $(-1,1)$ with a suitable transformation \cite{arnone2023onset}.
We can now substitute Eq. \eqref{expansion2} into \eqref{modad5} and then take advantage of orthogonality of Chebyshev polynomials with respect to the scalar product 
\begin{equation}
 \langle f, g\rangle = \int_{-1}^{1} \frac{f g}{\sqrt{1-z^2}} d z \qquad f, g \in L^2(-1,1)
\end{equation}
by taking the inner product with $T_k$ for $k=0,\dots, N$. As a result, we get a generalised eigenvalue problem of this kind
\begin{equation}\label{eigprob}
    \mathcal{A} \textbf{X} = \sigma \mathcal{B} \textbf{X}
\end{equation}
where $\mathcal{A}, \mathcal{B} \in \mathcal{M}^{3(N+1)+6\times 3(N+1)+6}$ are completed by adding six equations derived from boundary conditions. Hence, starting from \eqref{modad5}, we get
\begin{equation}
    \mathcal{A}= \begin{pmatrix}
        4D^2-k^2-\Da\left(4D^2-k^2\right)^2 & \Ra M(z)k^2 & \bm{0}\\
        \text{b.c.}&0\dots0&0\dots0\\
        -\Ra & 4D^2-k^2-H & H\\
        0\dots0&\text{b.c.}&0\dots0\\
        \bm{0} & H\gamma & 4D^2-k^2-H\gamma\\
        0\dots0&0\dots0&\text{b.c.}
    \end{pmatrix}
\end{equation}
where $M(z)=\zeta-\frac{z}{2}-\frac{1}{2}$, $\bm{0}\in\mathcal{M}^{N+1,N+1}$ and
\begin{equation}
    \mathcal{B}= \begin{pmatrix}
        \bm{0} & \bm{0} & \bm{0}\\
        0\dots0&0\dots0&0\dots0\\
        \bm{0} & 1 & \bm{0}\\
        0\dots0&0\dots0&0\dots0\\
        \bm{0} & \bm{0} & A\\
        0\dots0&0\dots0&0\dots0
    \end{pmatrix}
\end{equation}

The main problem when solving \eqref{eigprob} concerns the occurrence of spurious eigenvalues, because of the singularity of $\mathcal{B}$. This is a well recognised problem in literature (see \cite{bourne2003, dongarra,gheorghiu2014spectral,mcfadden}) but a valuable solution is found thanks to boundary conditions. We will not go through this, but we refer the reader to a very detailed explanation in \cite{arnone2023onset}. Once the new eigenvalue problem involves non-singular matrices, eigenvalues can be determined safely.

\subsection{Proof of Lemma \ref{lemma}}\label{proof}

	Let us consider the following system of partial differential equations
	\begin{equation}\label{P}
		\begin{cases}
				\dfrac{\partial T^f}{\partial t}+\textbf{v}\cdot \nabla T^f=\Delta T^f +H(T^s-T^f),\vspace{2mm}\\
				A\dfrac{\partial T^s}{\partial t}=\Delta T^s-H\gamma(T^s-T^f),\\
				\nabla\cdot \textbf{v}=0,
			\end{cases}
	\end{equation}
and the following periodic boundary conditions
\begin{equation}\label{}
	\begin{split}
	&		T^s=T^f=0\quad z=0,\\
	&	T^s=T^f=\Ra\quad z=1,\\
	&\textbf{v}\cdot \textbf{n}=0\quad z=0,1,
\end{split}
\end{equation}
and initial conditions
 \begin{equation}
 \begin{split}
	T^f(\textbf{x},0)&=T^f_0(\textbf{x}),\quad \textbf{x}\in\Omega,\\
	T^s(\textbf{x},0)&=T^s_0(\textbf{x}),\quad \textbf{x}\in\Omega,\\	
	\textbf{v}(\textbf{x},0)&=\textbf{v}_0(\textbf{x}),\quad \textbf{x}\in\Omega,	
	\end{split}
\end{equation}
with $\nabla\cdot \textbf{v}_0=0$. Let $ r\in\{f,s\} $, and let us introduce the following truncated function
\begin{equation}\label{key}
	\left(T^r(\textbf{x},t)-\Ra\right)_+=\begin{cases}
		T^r(\textbf{x},t)-\Ra \quad&\text{if}\quad T^r(\textbf{x},t)>\Ra,\\
		0\quad &\text{otherwise}.
	\end{cases}
\end{equation}
By multiplying \eqref{P}$ _1 $ by $ [(T^f-\Ra)_+]^{2n-1} $ and integrate over the periodicity cell $ \Omega $, we get
\begin{equation}\label{1}
	 \begin{split}
	&\int_\Omega\dfrac{\partial T^f}{\partial t}[(T^f-\Ra)_+]^{2n-1}\,d\Omega+\int_\Omega\textbf{v}\cdot \nabla T^f[(T^f-\Ra)_+]^{2n-1}\,d\Omega\\
	&\qquad\qquad=\int_\Omega\Delta T^f[(T^f-\Ra)_+]^{2n-1}\,d\Omega +H\int_\Omega(T^s-T^f)[(T^f-\Ra)_+]^{2n-1}\,d\Omega.
\end{split} 
\end{equation}
The boundary conditions permit us to write \eqref{1} as
\begin{equation}\label{1.1}
	\begin{split}
		\dfrac{1}{2n}\dfrac{d}{dt}\int_\Omega[(T^f&-\Ra)_+]^{2n}\,d\Omega=-(2n-1)\int_\Omega [(T^f-\Ra)_+]^{2n-2}(\nabla T^f)^2\,d\Omega\\
		&\underline{+H\int_\Omega T^s[(T^f-\Ra)_+]^{2n-1}\,d\Omega-H\int_\Omega T^f[(T^f-\Ra)_+]^{2n-1}\,d\Omega} .
	\end{split} 
\end{equation}
Following the same ideas, if we multiply equation \eqref{P}$ _2 $ by $ [(T^s-\Ra)_+]^{2n-1} $ and integrate over the periodicity cell $ \Omega $, we obtain
\begin{equation}\label{1.2}
	\begin{split}
\dfrac{A}{\gamma}\dfrac{1}{2n}\dfrac{d}{dt}\int_\Omega [(T^s&-\Ra)_+]^{2n}\,d\Omega=-\dfrac{1}{\gamma}(2n-1)\int_\Omega [(T^s-\Ra)_+]^{2n-2}(\nabla T^s)^2\,d\Omega\\
&\underline{-H\int_\Omega T^s[(T^s-\Ra)_+]^{2n-1}\,d\Omega+H\int_\Omega T^f[(T^s-\Ra)_+]^{2n-1}\,d\Omega}.
	\end{split}	
\end{equation}
If we now sum the underlined terms in Eqs. \eqref{1.1} and \eqref{1.2}, we notice, recalling the useful inequality $ (|a|^{p-2}a-|b|^{p-2}b)(a-b)\geq 0 $, that
\begin{equation}
  \begin{split}
	&H\int_\Omega [T^s-T^f]\left([(T^f-\Ra)_+]^{2n-1}-[(T^s-\Ra)_+]^{2n-1}\right)\,d\Omega\\
	&\qquad=-H\int_\Omega[T^f-T^s]\left([(T^f-\Ra)_+]^{2n-1}-[(T^s-\Ra)_+]^{2n-1}\right)\,d\Omega\\
	&\qquad=-H\int_\Omega[(T^f-\Ra)-(T^s-\Ra)]\left([(T^f-\Ra)_+]^{2n-1}-[(T^s-\Ra)_+]^{2n-1}\right)\,d\Omega\\
	&\qquad=-H\int_\Omega[(T^f-\Ra)_+-(T^s-\Ra)_+]\left([(T^f-\Ra)_+]^{2n-1}-[(T^s-\Ra)_+]^{2n-1}\right)\,d\Omega \\
	&\qquad\leq 0,
\end{split}   
\end{equation}
and, as a consequence, setting
\begin{equation}
    \mathcal{F}(T^f,T^s):=\int_\Omega \left\{[(T^f-\Ra)_+]^{2n}+\dfrac{A}{\gamma}[(T^s-\Ra)_+]^{2n}\right\}\,d\Omega,
\end{equation}
if we sum Eq. \eqref{1.1} to Eq. \eqref{1.2}, we have
\begin{equation}\label{2}
	\begin{split}
		\dfrac{1}{2n}\dfrac{d\mathcal{F}}{dt}&=-(2n-1)\int_\Omega [(T^f-\Ra)_+]^{2n-2}(\nabla T^f)^2\,d\Omega\\
  &\qquad\qquad\qquad-\dfrac{1}{\gamma}(2n-1)\int_\Omega [(T^s-\Ra)_+]^{2n-2}(\nabla T^s)^2\,d\Omega\\
		& -H\int_\Omega[(T^f-\Ra)_+-(T^s-\Ra)_+]\left([(T^f-\Ra)_+]^{2n-1}-[(T^s-\Ra)_+]^{2n-1}\right)\,d\Omega \\
		&\leq 0.
	\end{split}
\end{equation}
Inequality \eqref{2} shows that
\begin{equation}\label{key}
	\dfrac{d}{dt}\mathcal{F}^{\frac{1}{2n}}=	\dfrac{1}{2n}\dfrac{d\mathcal{F}}{dt}\mathcal{F}^{\frac{1}{2n}-1} \leq 0,
\end{equation}
hence
\begin{equation}\label{3}
	\left[\mathcal{F}(T^f,T^s)\right]^{\frac{1}{2n}}\leq \left[\mathcal{F}(T^f_0,T^s_0)\right]^{\frac{1}{2n}}.
\end{equation}
Recalling that $ \lim_{p\rightarrow \infty}\left(\int_\Omega |f|^p+|g|^p\,d\Omega\right)^{\frac{1}{p}}\leq \|f\|_{\infty}+\|g\|_\infty $, from \eqref{3} we have, assuming $ T_0^f,T^s_0\in L^\infty $, that
\begin{equation}\label{est}
\underset{\Omega}{\text{ess}\sup}\left\{(T^f-\Ra)_+\right\}\leq \underset{\Omega}{\text{ess}\sup}\left\{(T^f_0-\Ra)_+\right\}+\dfrac{A}{\gamma}\underset{\Omega}{\text{ess}\sup}\left\{(T^s_0-\Ra)_+\right\}(<\infty).
\end{equation}
Let us now consider $ T^f(\textbf{x},t)=\theta(\textbf{x},t)+T_b(z) $, $ T^s(\textbf{x},t)=\phi(\textbf{x},t)+T_b(z) $, with associated initial data $ T^f_0(\textbf{x})=\theta_0(\textbf{x})+T_b(z) $, $ T^s_0(\textbf{x})=\phi_0(\textbf{x})+T_b(z) $. Therefore, inequality \eqref{est} becomes
\begin{equation}\label{key}
\begin{split}
   		\underset{\Omega}{\text{ess}\sup}\left\{(\theta+T_b-\Ra)_+\right\}&\leq \underset{\Omega}{\text{ess}\sup}\left\{(\theta_0+T_b-\Ra)_+\right\}\\
     &\qquad\qquad\qquad+\dfrac{A}{\gamma}\underset{\Omega}{\text{ess}\sup}\left\{(\phi_0+T_b-\Ra)_+\right\}, 
\end{split}
\end{equation}
and setting
\begin{equation}\label{key}
	 \begin{split}
	\bar{\theta}_0:&= \underset{\Omega}{\text{ess}\sup}\left\{(\theta_0+T_b-\Ra)_+\right\}(<\infty),\\
	\bar{\phi}_0:&=\dfrac{A}{\gamma}\underset{\Omega}{\text{ess}\sup}\left\{(\phi_0+T_b-\Ra)_+\right\}(<\infty),
\end{split}
\end{equation}
we can conclude that
\begin{equation}\label{key}
	\| \theta+T_b-\Ra \|_{L^\infty(\Omega_1)}\leq \Gamma,
\end{equation}
i.e.
\begin{equation}\label{key}
	\theta(\textbf{x},t)+T_b(z)-\Ra\leq \Gamma,\qquad \text{a.e. on }\Omega_1,
\end{equation}
where
\begin{equation}\label{key}
	\Gamma=\begin{cases}
		\bar{\theta}_0 &\quad \text{if}\quad \phi_0\leq \Ra -T_b, \\
		\bar{\theta}_0+\bar{\phi}_0&\quad \text{otherwise},
	\end{cases} 
\end{equation}
and the Lemma is proved.

\bibliographystyle{unsrt}
\bibliography{main}

\end{document}